\newtheorem{theorem}{Theorem}
\newtheorem{lemma}[theorem]{Lemma}
\theoremstyle{definition}
\newtheorem{definition}[theorem]{Definition}
\newcommand{\N}{\mathbb{N}}
\newcommand{\R}{\mathbb{R}}
\newcommand{\dash}{\nobreakdash-\hspace{0pt}}
\DeclarePairedDelimiter{\abs}{\lvert}{\rvert}
\DeclarePairedDelimiter{\paren}{(}{)}
\DeclarePairedDelimiter{\brackets}{[}{]}
\newcommand{\eval}[3][]{\brackets[#1]{#2}_{#3}}
\newcommand{\size}[2][]{\abs[#1]{#2}}
\newcommand{\closedball}[3][]{B\brackets[#1]{#3;#2}}
\begin{document}

\title{A more reasonable proof of Cobham's theorem}
\author{Thijmen J. P. Krebs}
\address{
  Delft Institute of Applied Mathematics \\
  Delft University of Technology \\
  PO Box 5031 \\
  2600 GA Delft \\ 
  Netherlands}
\email{\href{mailto:t.j.p.krebs@protonmail.com}{\nolinkurl{t.j.p.krebs@protonmail.com}}}
\subjclass[2010]{11B85 (Primary), 68Q45 (Secondary)}
\keywords{Cobham's theorem, automatic sequences}

\begin{abstract}
We present a short new proof of Cobham's theorem without using Kronecker's approximation theorem,
making it suitable for generalization beyond automatic sequences.
\end{abstract}

\maketitle

\section{Introduction}

In this note we give a short proof of the following celebrated theorem on automatic sequences.

\begin{theorem} [Cobham] \label{theorem:cobham}
Let $a,b \in \N_{\geq 2}$ be multiplicatively independent (i.e.~$a^m\neq b^n$ for all $m,n\in \N_{>0}$). A sequence $(f_x)_{x\in\N}$ is $a$\dash{} and $b$\dash{}automatic if and only if it is ultimately periodic.
\end{theorem}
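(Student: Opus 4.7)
The easy direction is quick: an ultimately periodic sequence has a finite $k$-kernel for every base $k \geq 2$, since each element of the kernel is itself ultimately periodic with period dividing that of $f$, and only finitely many such sequences exist with a given bound on preperiod and period. So I will focus on the converse.

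For the converse, assume $f$ is both $a$\dash{} and $b$\dash{}automatic, with $a,b$ multiplicatively independent. The plan is to replace Kronecker's density theorem by the following much weaker Dirichlet-style consequence, which is pure pigeonhole on the fractional parts of $m\log a/\log b$: for every $\epsilon>0$ there exist positive integers $m,n$ with $\size{a^{m}/b^{n}-1}<\epsilon$. I would then combine this with the finiteness of the $a$- and $b$-kernels as follows. For a large exponent $M$, the restriction of $f$ to any block $[a^{M}x,a^{M}(x+1))$ is determined by the state reached by the $a$-automaton after reading the base-$a$ expansion of $x$, so only finitely many block patterns appear as $x$ varies; the analogous statement holds in base $b$. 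Choosing $(M,M')$ with $a^{M}\approx b^{M'}$ and applying pigeonhole to the finite product of the two kernels should yield two values of $x$ whose blocks agree in both bases, and the multiplicative closeness of $a^{M}$ to $b^{M'}$ then translates this agreement into an additive shift that serves as a candidate period of $f$.

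The main obstacle is turning such an almost-match into an exact equality $f_{z}=f_{z+p}$ on a full tail of indices. In classical proofs one uses Kronecker to assert that the shifts arising are syndetic and hence cover all large $z$; here they are only sparse. My plan is therefore to refine the approximation repeatedly, producing a sequence of candidate periods $p_{N}$ that, by kernel finiteness, cannot proliferate indefinitely, so they must eventually stabilize or share a common refinement that is a genuine eventual period of $f$. Controlling this stabilization cleanly — ensuring that the multiplicative error in $a^{m}/b^{n}\approx 1$ does not inflate into an unbounded additive error as $\epsilon$ shrinks, and doing so using only the finitary pigeonhole input — is the step I expect to be the crux, and presumably the point at which the paper's new idea enters.
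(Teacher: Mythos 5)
Your setup is on the right track --- the easy direction is fine, and the paper does indeed replace Kronecker by exactly the Dirichlet-style statement you quote --- but the proposal has a genuine gap at precisely the point you flag as the crux, and the idea you sketch for closing it (iteratively refining the approximation and hoping the candidate periods ``stabilize or share a common refinement'') is not the one that works and is not obviously repairable. The difficulty is structural: with disjoint blocks $[a^Mx, a^M(x+1))$, a coincidence of block patterns at two values of $x$ only yields $f_z = f_{z+p}$ for $z$ ranging over a single block, and no amount of shrinking $\epsilon$ turns a family of such isolated local coincidences into a period valid on a full tail; kernel finiteness bounds how many candidate periods appear but gives no mechanism for any one of them to propagate between blocks.

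The missing idea is to make the blocks \emph{overlap} and then chain local periods across the overlaps. Concretely, one runs both automata over a redundant digit set $D_c = \closedball{c}{c}$, so that every integer in $\closedball{c^n}{c^n}$ (an interval of length about $2c^n$, twice a standard block) is represented by a word of length exactly $n$; the state reached on the prefix then forces $f_{xc^n+z}=f_{yc^n+z}$ for all such $z$ whenever $x,y$ reach the same state. A \emph{single} application of Dirichlet gives $m,n$ with $\size{a^m-b^n}$ so small that each pair $x_{s},y_{s}$ reaching a common state in both automata yields a local period $p_{s}=(x_{s}-y_{s})(a^m-b^n) \leq \tfrac16 b^n$ on an interval $I_x$ of length $\tfrac43 b^n$ around $xb^n + b^n$, for \emph{every} $x$ reaching state $s$ in base $b$ --- the ``teleport'' between bases happens inside one interval, with the additive error $y_s(a^m-b^n)$ absorbed by the slack in the digit set. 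Consecutive intervals $I_y, I_{y+1}$ overlap in at least $\tfrac13 b^n \geq p+q$ points, and an elementary lemma (local periodicity on two intervals whose intersection has length at least the sum of the two periods extends the first period to the union) then propagates the period of the first interval across the entire tail by induction. No iteration of the approximation and no syndeticity argument is needed; the overlap lemma is what substitutes for Kronecker.
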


The theorem is originally proven in~\cite{cobham}. To quote~\cite[p.~118]{eilenberg}:
``The proof is correct, long and hard. It is a challenge to find a more reasonable proof of this fine theorem.''.
In response, \cite{hansel-cobham} suggested an easier approach that starts by showing the sequence is syndetic (i.e.~the gap between any successive occurrences of any value is bounded) before proving it is ultimately periodic with a combinatorial argument.
See for instance the proof in~\cite{allouche-book} (together with~\cite{rigo-errata}).

Syndeticity is typically established using Kronecker's approximation theorem (see~\cite{durand-survey-cobham}),
but that proved to be problematic for automatic functions on the Gaussian integers as shown in~\cite{hansel-gaussian}.
Our proof differs entirely from the classical approach and only needs a consequence of the much weaker approximation theorem of Dirichlet.

\section{Preliminaries}

We assume basic familiarity with formal language terminology, and briefly recall a few standard notions from automatic sequence theory.
We refer to~\cite{allouche-book} for a comprehensive treatment.

\begin{definition}
A \emph{deterministic finite automaton with output} (\emph{DFAO}) is a tuple $(S,D,\delta,s_0,F)$, where $S$ is a finite set of \emph{states}, $D$ a finite \emph{input alphabet}, $\delta\colon S\times D \to S$ a \emph{transition function}, $s_0 \in S$ an \emph{initial state}, and $F$ an \emph{output function} on $S$.
On input $w \in D^*$ it outputs $F(\delta(s_0,  w))$, where we extend $\delta(s_0, w) = \delta(\delta(s_0,u),v)$ for any $u,v\in D^*$ with $w=uv$ as usual.
\end{definition}

\begin{definition}
In base $b \in \N_{\geq 2}$, a word $w \in \N^*$ of length $n$ represents the natural number
$\eval{w}{b} = w_0b^{n-1} + \ldots + w_{n-2}b + w_{n-1}$, and a language $L \subseteq \N^*$ represents $\eval{L}{b} = \{\eval{w}{b}\mid w \in L\}$.
\end{definition}

\begin{definition}
Let $b \in \N_{\geq 2}$ and $\{0,1,\ldots,b-1\} \subseteq D \subseteq \N$ be finite.
A sequence $(f_x)_{x\in\N}$ is $(b,D)$\dash{}\emph{automatic} if there is a DFAO $(S,D,\delta,s_0,F)$ such that $f_{\eval{w}{b}} = F(\delta(s_0, w))$ for all $w \in D^*$.
A sequence $(f_x)_{x\in\N}$ is $b$\dash{}\emph{automatic} if it is $(b,\{0,1,\ldots,b-1\})$\dash{}automatic.
\end{definition}

\begin{lemma} \label{lemma:digits-auto}
Let $b \in \N_{\geq 2}$ and $\{0,1,\ldots,b-1\} \subseteq D \subseteq \N$ be finite.
A sequence $(f_x)_{x\in\N}$ is $b$\dash{}automatic if and only if it is $(b,D)$\dash{}automatic.
\end{lemma}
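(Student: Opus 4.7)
The direction ``$(b,D)$-automatic $\Rightarrow$ $b$-automatic'' is immediate: restricting a DFAO with input alphabet $D$ to inputs over $\{0,\ldots,b-1\}\subseteq D$ yields a $b$-DFAO realising the same sequence, because $\eval{w}{b}$ represents the same number whether $w$ is read over $D$ or over the subalphabet.

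For the converse, my plan is to apply Myhill--Nerode to the function $g\colon D^*\to\mathrm{range}(f)$ defined by $g(w):=f_{\eval{w}{b}}$. Define the equivalence
\[
n\equiv n' \iff f_{nb^k+z}=f_{n'b^k+z}\text{ for all } k\geq 0 \text{ and } 0\leq z<b^k
\]
on $\N$. Since $f$ is $b$-automatic, any $b$-DFAO for $f$ induces a refinement of $\equiv$ with finitely many classes, so $\equiv$ itself has finitely many classes. Using $\eval{wu}{b}=\eval{w}{b}\cdot b^{|u|}+\eval{u}{b}$, the Myhill--Nerode right-congruence $w\sim w' \iff \forall u\in D^*,\ g(wu)=g(w'u)$ depends only on $n:=\eval{w}{b}$ and $n':=\eval{w'}{b}$, and unfolds to
\[
f_{nb^k+y}=f_{n'b^k+y}\quad\text{for all } k\geq 0 \text{ and } y\in Y_k:=\{\eval{u}{b}:u\in D^*,\,|u|=k\}.
\]

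The key step is to decompose each $y\in Y_k$ as $y=ab^k+y'$ with $0\leq y'<b^k$; then $nb^k+y=(n+a)b^k+y'$, so the displayed condition is implied by $(n+a)\equiv(n'+a)$. The bound $y\leq\max(D)\,(b^k-1)/(b-1)$ forces $a<\max(D)/(b-1)$ independently of $k$, so only finitely many values $0\leq a\leq A:=\lceil\max(D)/(b-1)\rceil$ occur. The refinement $n\approx n' \iff (n+a)\equiv(n'+a)$ for all $0\leq a\leq A$ thus implies $\sim$ and has at most $N^{A+1}$ classes, where $N$ is the (finite) number of $\equiv$-classes. Hence $\sim$ has finitely many classes, and Myhill--Nerode produces the desired $(b,D)$-DFAO. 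The only substantive step is the $k$-uniform bound on $a$; everything else is routine bookkeeping.
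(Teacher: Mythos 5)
Your proof is correct, but it takes a genuinely different route from the paper's. The paper handles the nontrivial direction by citation: adapt the standard argument of Allouche--Shallit (Thm.~6.8.6) using Lothaire's normalization transducer on $D^*$, i.e.\ first convert a word over $D$ into a canonical base-$b$ representation of the same integer by a finite transducer and then compose with the given $b$-DFAO. You instead give a direct, self-contained Myhill--Nerode argument: the kernel relation of $w \mapsto f_{\eval{w}{b}}$ on $D^*$ factors through $\N$ and is refined by the finite-index relation comparing the $\equiv$-classes of $n+a$ for the boundedly many possible overflows $a$. The uniform bound $\eval{u}{b} \leq \max(D)(b^{\size{u}}-1)/(b-1)$, which forces $a < \max(D)/(b-1)$ independently of $\size{u}$, is exactly the point where both proofs exploit the same phenomenon --- carries past position $\size{u}$ are bounded --- but you use it to bound the index of a right congruence rather than to build a transducer. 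All your steps check out: $\equiv$ has finite index because state-equality in any $b$-DFAO refines it, the decomposition $nb^k+y=(n+a)b^k+y'$ with $0\leq y'<b^k$ is valid, and $\sim$ is a right congruence on which $g$ is constant, so the quotient Moore machine is the required $(b,D)$-DFAO. Your version buys self-containedness (no imported normalization or transducer-composition results) at the cost of a cruder state bound $N^{A+1}$ and a less constructive flavour; the paper's one-line citation is shorter and reuses effective, standard machinery.
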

\begin{proof}
Adapt~\cite[Thm.~6.8.6]{allouche-book} to use the transducer of~\cite[Prop.~7.1.4]{lothaire} for normalization on~$D^*$.
\end{proof}

Any two bases have relatively close powers, which follows easily from Dirichlet's approximation theorem or by mimicking its proof to avoid logarithms as follows.

\begin{lemma} \label{lemma:base-approximation}
Let $a,b \in \N_{\geq 2}$ and $\epsilon \in \R_{>0}$. Then there are $m,n \in \N_{>0}$ such that $\abs{a^m-b^n} \leq \epsilon b^n$.
\end{lemma}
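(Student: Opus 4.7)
The plan is to mimic Dirichlet's classical argument by applying the pigeonhole principle directly to the powers of $a$, avoiding logarithms entirely.

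First I would fix a large integer $N$ to be specified at the end, and for each $i \in \{0,1,\ldots,N\}$ let $k_i \in \N$ be the unique non-negative integer satisfying $b^{k_i} \leq a^i < b^{k_i+1}$. The $N+1$ ratios $r_i := a^i/b^{k_i}$ then all lie in the half-open interval $[1,b)$. Partitioning this interval into $N$ equal pieces of length $(b-1)/N$ and applying pigeonhole would yield indices $0 \leq i < j \leq N$ with $\abs{r_j - r_i} \leq (b-1)/N$.

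Next I would set $m := j - i \geq 1$ and $n := k_j - k_i$, note that $r_j/r_i = a^m/b^n$, and divide the pigeonhole estimate by $r_i \geq 1$ to obtain $\abs{a^m/b^n - 1} \leq (b-1)/N$, i.e.~$\abs{a^m - b^n} \leq \frac{b-1}{N}\, b^n$. Choosing $N$ so that $(b-1)/N \leq \epsilon$ would then finish the estimate.

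The one point that merits care, and which I expect to be the main (though mild) obstacle, is verifying $n \geq 1$, as the statement demands $n \in \N_{>0}$. Should $k_j = k_i$, then $r_j/r_i = a^{j-i} \geq a \geq 2$, whence $r_j - r_i \geq r_i \geq 1$, contradicting $\abs{r_j - r_i} < 1$ as soon as $N > b-1$. Picking any $N \geq \max(b,\lceil (b-1)/\epsilon \rceil)$ therefore secures both $n \geq 1$ and the required bound in a single stroke.
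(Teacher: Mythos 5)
Your proof is correct and takes essentially the same route as the paper: both mimic Dirichlet by applying the pigeonhole principle to the points $a^i b^{-k_i} \in [1,b)$ and then clearing denominators to get $\abs{a^m - b^n} \leq \epsilon b^n$. The only (cosmetic) differences are your explicit finite pigeonhole with $N$ cells versus the paper's infinite one, and your contradiction argument ruling out $k_j = k_i$ in place of the paper's reduction to the case $a \geq b$, which makes the sequence of exponents strictly increasing.
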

\begin{proof}
We may assume that $a\geq b$ by taking a suitable power of $a$, so the sequence $(f_x)_{x\in \N}$ given by $a^xb^{-f_x} \in [1,b)$ for all $x \in \N$ is strictly increasing.
By the pigeonhole principle there are natural numbers $x < y$ such that $\abs{a^{y}b^{-f_y} - a^{x}b^{-f_x}} \leq \epsilon$,
that is, $\abs{a^{y-x} - b^{f_y-f_x}} \leq \epsilon b^{f_y}a^{-x} \leq \epsilon b^{f_y-f_x}$.
\end{proof}

A sequence $(f_x)_{x\in\N}$ has \emph{local period} $p \in \N_{>0}$ on an interval $I \subseteq \N$ if $f_x = f_{x+p}$ for all $x,x+p \in I$.
Local periodicity on sufficiently overlapping intervals extends to their union.

\begin{lemma} \label{lemma:overlapping-local-period}
Let $(f_x)_{x\in\N}$ have local period $p$ on an interval $I$ and local period $q$ on an interval $J$.
If $\size{I \cap J} \geq p+q$, then $f$ has local period $p$ on the interval $I \cup J$.
\end{lemma}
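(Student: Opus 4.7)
The plan is to reduce the claim to showing local period $p$ on $J$, and then to shift any test pair in $J$ by multiples of $q$ (staying inside $J$) so that it lands inside $I \cap J$, where local period $p$ is available from $I$.

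Write $I = [a_1, b_1]$ and $J = [a_2, b_2]$ (possibly unbounded) and assume without loss of generality $a_1 \leq a_2$. The case $J \subseteq I$ is trivial, so I may assume $a_2 \leq b_1 < b_2$, so that $I \cap J = [a_2, b_1]$ and $I \cup J = [a_1, b_2]$. The hypothesis $\size{I \cap J} \geq p + q$ gives $p \leq b_1 - a_2$, which excludes having $x \in I \setminus J$ and $x + p \in J \setminus I$ simultaneously. Hence once local period $p$ is established on $J$, local period $p$ on all of $I \cup J$ follows at once.

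For the remaining claim fix $x, x+p \in J$. If $x + p \leq b_1$ then both points lie in $I$ and we are done, so assume $x + p > b_1$. I seek a positive integer $k$ with $kq$ in the integer interval $[x + p - b_1,\, x - a_2]$; this interval has length $b_1 - a_2 - p \geq q - 1$, so it contains $q$ consecutive integers and hence a multiple of $q$. For this $k$, both $x - kq$ and $x + p - kq$ lie in $I \cap J$, while convexity of $J$ keeps the descending chains $x, x - q, \ldots, x - kq$ and $x + p, x + p - q, \ldots, x + p - kq$ entirely inside $J$. Iterating local period $q$ on $J$ along each chain and applying local period $p$ on $I$ at the endpoints yields
\[
f_x \;=\; f_{x-kq} \;=\; f_{x+p-kq} \;=\; f_{x+p}.
\]

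The only delicate point is the existence of the shift $k$; this is exactly where the precise threshold $\size{I \cap J} \geq p + q$ is needed rather than something weaker. The remaining verifications are routine bookkeeping about convex subsets of $\N$.
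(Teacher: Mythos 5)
Your proof is correct and follows essentially the same route as the paper: rule out the case where $x$ and $x+p$ straddle $I\setminus J$ and $J\setminus I$, then for a pair in $J$ shift by a multiple of $q$ into $I\cap J$ and apply the period $p$ there, which is exactly the paper's choice of $y,y+p\in I\cap J$ with $y\equiv x\pmod q$. You simply spell out the bookkeeping that the paper leaves implicit.
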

\begin{proof}
Pick any $x,x+p \in I \cup J$.
If $x,x+p \in I$, we have $f_x = f_{x+p}$ by assumption. 
Otherwise, since the interval $I\cap J$ has cardinality at least $p+q$, we have $x,x+p \in J$ and we can pick $y,y+p \in I \cap J$ such that $y \equiv x \pmod q$.
Then $f_x = f_y = f_{y+p} = f_{x+p}$ by local periodicity on $J$, $I$ and $J$ respectively.
\end{proof}

\section{Proof}

Let $\closedball{r}{x} = \{y \in \N \mid \abs{y-x} \leq r\}$ be the interval centered on $x \in \R_{\geq 0}$ with radius $r \in [0,x]$.

\begin{proof}[Proof of theorem~\ref{theorem:cobham}.]
As usual, we only prove the forward direction.

For each $c \in \{a,b\}$, $f$ is computed by a DFAO $(S_c,D_c,\delta_c,s_{0,c},F_c)$ in base $c$ with digits $D_c = \closedball{c}{c}$ by lemma~\ref{lemma:digits-auto}. 
It is easy to check that $\closedball{c^n}{c^n} \subseteq \eval{D_c^n}{c}$ for all $n \in \N_{>0}$.
Define $L_{cs} = \{w \in D_c^* \mid \delta_c(s_{0,c}, w) = s\}$ for $s \in S_c$. Then for all $w \in L_{cs}$ and $v \in D_c^*$ we have
$f_{\eval{wv}{c}} = F_c(\delta_c(s_{0,c},wv)) = F_c(\delta_c(s, v))$, so for all $x,y \in \eval{L_{cs}}{c}$, $n \in \N$ and $z \in \eval{D_c^n}{c}$
\begin{equation} \label{eq:teleport-auto}
f_{xc^n+z} = f_{yc^n+z}.
\end{equation}

We create local periods of $f$ as follows.
Let $S_{\infty}$ be the set of $s \in S_b$ for which $\eval{L_{bs}}{b}$ is infinite.
Since $\{\eval{L_{at}}{a} \mid t \in S_a\}$ is a finite cover of $\N$, 
we can fix for each $s \in S_{\infty}$ some $t \in S_a$ and distinct $x_{st}, y_{st} \in \eval{L_{bs}}{b} \cap \eval{L_{at}}{a}$.
Letting $\xi = \max\{x_{st},y_{st} \mid s \in S_{\infty}\}+1$, we can find $m,n \in \N_{>0}$ such that $\xi \abs{a^m-b^n} \leq \frac{1}{6} b^n$ by lemma~\ref{lemma:base-approximation}.
In particular, we get $\frac{5}{6} b^n \leq a^m$.
Since $a^m \neq b^n$ we can take $p_{st} = (x_{st}-y_{st}) (a^m-b^n) \in (0,\frac{1}{6}b^n]$ for all $s\in S_{\infty}$ by swapping $x_{st}$ and $y_{st}$ if necessary.

We show for each $s \in S_{\infty}$ and $x \in \eval{L_{bs}}{b}$ that $f$ has local period $p_{st}$ on the interval $I_x = \closedball{\frac{2}{3}b^n}{xb^n+b^n}$. 
Pick any $z,z+p_{st} \in \closedball{\frac{2}{3}b^n}{b^n} \subseteq \eval{D_b^n}{b}$. 
Since
\[
\abs[]{z-y_{st}(a^m-b^n) - a^m}
\leq \abs[]{z-b^n} + (y_{st}+1) \abs[]{a^m-b^n}
\leq \tfrac{5}{6}b^n \leq a^m,
\]
we have $z-y_{st}(a^m-b^n) \in \closedball{a^m}{a^m} \subseteq \eval{D_a^m}{a}$.
Hence, using~\eqref{eq:teleport-auto} thrice we see as desired
\[
\begin{aligned}
f_{xb^n+z} 
&= f_{y_{st}b^n+z} \\
&= f_{y_{st}a^m + z - y_{st}(a^m-b^n)} \\
&= f_{x_{st}a^m + z - y_{st}(a^m-b^n)} \\
&= f_{x_{st}b^n + z + p_{st}} \\
&= f_{xb^n + z + p_{st}}.
\end{aligned}
\]

Let $x \in \N$ be such that $\{\eval{L_{bs}}{b}\mid s \in S_{\infty}\}$ covers $x+\N$, and fix for $f$ a local period $p_y \leq \frac{1}{6}b^n$ on $I_y$ for all $y \geq x$.
We show that $f$ has local period $p_x$ on $\bigcup_{x\leq y\leq z} I_y$ for all $z \geq x$ by induction.
It surely holds if $z = x$. 
Otherwise, $f$ has local period $p_x$ on $\bigcup_{x\leq y < z} I_y$ by induction and local period $p_z$ on $I_z$, so
lemma~\ref{lemma:overlapping-local-period} proves our induction hypothesis as
$\paren[\big]{\bigcup_{x\leq y < z} I_y} \cap I_z = \closedball{\frac{1}{6}b^n}{(z+\frac{1}{2})b^n}$ has cardinality at least $\lfloor \frac{1}{3}b^n\rfloor \geq 2\lfloor \frac{1}{6}b^n\rfloor \geq p_x+p_z$.

We conclude that $f$ has local period $p_x$ on $\bigcup_{x \leq y} I_y$, that is, $f$ is ultimately periodic.
\end{proof}

\section{Future work}

We will show that our approach extends well to prove the Cobham-Semenov theorem from~\cite{semenov}, and to prove the Cobham-type theorem for automatic functions on vectors of imaginary quadratic integers. In particular, we will establish the conjecture for the Gaussian integers from~\cite{hansel-gaussian}.

\bibliographystyle{abbrv}
\bibliography{cobham-article}

\end{document}